\pgfplotsset{compat = newest}
\newtheorem{defn}{Definition}
\newtheorem{rem}{Remark}
\newtheorem{thm}{Theorem}
\newtheorem{lem}{Lemma}
\newtheorem{cor}{Corollary}
\newtheorem{assum}{Assumption}
\newcommand\tran{\mkern-2mu\raise1.25ex\hbox{$\scriptscriptstyle\top\hspace{0.5mm}$}\mkern-3.5mu}
\newcommand{\R}{\mathbb{R}}
\newcommand{\N}{\mathbb{N}}
\newcommand{\C}{\mathcal{C}}
\newcommand{\D}{\mathcal{D}}
\newcommand{\X}{\mathcal{X}}
\newcommand{\bm}[1]{{\boldsymbol{#1}}}
\DeclareMathOperator{\diag}{diag}
\DeclareMathOperator{\var}{var}
\DeclareMathOperator{\rank}{rank}
\DeclareMathOperator{\prob}{p}
\newcommand{\GP}{\mathcal{GP}}
\newcommand{\z}{\bm z}
\newcommand{\x}{\bm x}
\newcommand{\m}{\bm m}
\newcommand{\dx}{\dot{\bm x}}
\newcommand{\f}{\bm{f}}
\renewcommand{\u}{\bm{u}}
\newcommand{\y}{\bm{y}}
\crefname{rem}{Remark}{Remarks}
\crefname{exam}{Example}{Examples}
\crefname{assum}{Assumption}{Assumptions}
\crefname{prop}{Proposition}{Propositions}
\crefname{propy}{Property}{Properties}
\crefname{cor}{Corollary}{Corollaries}
\crefname{lem}{Lemma}{Lemmas}
\crefname{section}{Section}{Sections}
\crefname{thm}{Theorem}{Theorems}
\crefname{alg}{Algorithm}{Algorithms}
\crefname{defn}{Definition}{Definitions}
\crefname{figure}{Fig.}{Fig.}
\Crefname{figure}{Figure}{Figures}
\crefname{equation}{}{}
\title{\LARGE \bf
Data-driven Bayesian Control of Port-Hamiltonian Systems
}
\author{Thomas Beckers
\thanks{Thomas Beckers is with the Department of Computer Science, Vanderbilt University, Nashville, TN 37212, USA {\tt\small thomas.beckers@vanderbilt.edu}}%
}
\begin{document}

\maketitle
\thispagestyle{empty}
\pagestyle{empty}

\begin{abstract}
Port-Hamiltonian theory is an established way to describe nonlinear physical systems widely used in various fields such as robotics, energy management, and mechanical engineering. This has led to considerable research interest in the control of Port-Hamiltonian systems, resulting in numerous model-based control techniques. However, the performance and stability of the closed-loop typically depend on the quality of the PH model, which is often difficult to obtain using first principles. We propose a Gaussian Processes (GP) based control approach for Port-Hamiltonian systems (GPC-PHS) by leveraging gathered data. The Bayesian characteristics of GPs enable the creation of a distribution encompassing all potential Hamiltonians instead of providing a singular point estimate. Using this uncertainty quantification, the proposed approach takes advantage of passivity-based robust control with interconnection and damping assignment to establish probabilistic stability guarantees. 
\end{abstract}

\section{Introduction}
The modeling and control of physical systems is a crucial task in a broad range of domains such as physics, engineering, applied mathematics, and medicine~\cite{derler2011modeling}. Applications range from model-based control of autonomous systems~\cite{brosilow2002techniques} over the perception of dynamical objects~\cite{yin2021modeling} to the understanding of complex chemical processes~\cite{ikonen2001advanced}. A large class of physical systems can be described by Port-Hamiltonian systems (PHS), see~\cite{van2014port}.

PHSs are a powerful mathematical framework for modeling and analyzing physical systems, widely used in control theory, robotics, mechanical engineering, and other related fields. The PHS approach describes physical systems as a set of interconnected subsystems, each representing a specific physical domain such as electronics or mechanics. The dynamics of each subsystem is described by Hamiltonian equations, while the interconnections between subsystems are described by power-based port variables. This unique approach allows for systematic analysis of complex physical systems, providing a clear and intuitive understanding of their behavior, and enabling the design of efficient control strategies. Thus, in recent years, the control of PHS has gained significant attention in research, see \cite{van2000l2}.  

The interconnection and damping assignment - passivity based control (IDA-PBC) technique, introduced in~\cite{ortega2002interconnection}, is a well-known PBC design method for the control of physical system,
where the desired closed-loop dynamics takes the form of a PHS. 
The IDA-PBC methodology involves two main steps. The first step is the interconnection assignment, which involves the design of interconnections between subsystems that ensure the stability of the overall system. 
\begin{figure}[t]
\begin{center}
	\input{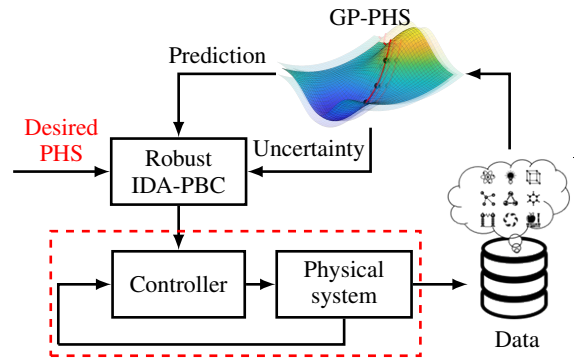}
	\vspace{-0.2cm}\caption{Overview of the GPC-PHS approach. First, a GP-PHS model is used to learn the partially unknown dynamics of the physical systems. Then, the model enables the development of an IDA-PBC-based controller that is robustified against the model mismatch given by the uncertainty quantification of the GP-PHS.}\vspace{-0.7cm}
	\label{fig:gpcphs}
\end{center}
\end{figure}
The second step is damping assignment, which involves the design of damping terms to regulate the energy flow within the system and ensure its passivity. It has been successfully applied to a wide range of systems, including mechanical systems, power systems, and biological systems~\cite{gomez2004physical,ortega2004interconnection,9483212}.

However, this approach requires knowledge about the Hamiltonian, the interconnection, and damping matrix of the system to be controlled. These governing equations can be challenging to obtain due to highly nonlinear dynamics or the complexity of the physical system. Recently, we have introduced Gaussian process Port-Hamiltonian systems (GP-PHS) as Bayesian data-driven techniques for the identification of physical systems, see~\cite{9992733}. Gaussian processes (GPs) are a powerful and flexible machine learning tool that has gained significant attention in recent years. GPs provide a probabilistic framework for modeling complex and noisy data, enabling not only predictions but also uncertainty quantification. The goal of this paper is to combine the power of IDA-PBC based control approaches with data-driven Bayesian GP-PHS models to enable the control of physical systems with partially unknown dynamics.

In~\cite{nageshrao2015port}, the authors summarize data-driven control approaches for PHS. Adaptive control approaches for PHS are presented in~\cite{wang2007simultaneous,dirksz2010adaptive} and a reinforcement control for PHS in~\cite{sprangers2014reinforcement}. However, these approaches are limited to parametric uncertainties and lack of analysis tools, as well as a detailed study of the properties of the closed-loop system. In contrast, iterative learning control~\cite{fujimoto2003iterative} and repetitive control~\cite{fujimoto2004iterative} allow control of PHS without a priori knowledge of the system, but require repetitive executions. A controller that is robust to model uncertainties is introduced in~\cite{ryalat2018robust} but suffers from the trade-off between performance and robustness.

\textbf{Contribution:} In this paper, we propose a data-driven Bayesian control approach for physical systems with partially unknown dynamics. For this purpose, we employ Gaussian Process Port-Hamiltonian system models to learn the unknown dynamics of the system based on state measurements. Then, we introduce an IDA-PBC based control approach, which uses the mean prediction of the GP-PHS model to render the closed-loop dynamics to the desired PHS as close as possible. The uncertainty quantification of the GP-PHS model allows us to robustify the controller against the model mismatch such that the closed-loop has a stable desired equilibrium with high probability. Finally, we show that the closed-loop dynamics converges to the desired PHS for increasing amount of data.

The remainder of the paper is structured as follows. We introduce the idea of PHS and the problem setting in~\cref{sec:def}, followed by the proposed GPC-PHS control law in~\cref{sec:ctrl}. Finally, a simulation shows the benefits of GPC-PHS in~\cref{sec:sim}.

\section{Preliminaries}\label{sec:def}
In this section, we briefly describe the class of Port-Hamiltonian systems and introduce the problem setting. 
\subsection{Port-Hamiltonian Systems}
Composing Hamiltonian systems with input/output ports leads to a Port-Hamiltonian system, which is a dynamical system with ports that specify the interactions of its components. The dynamics of a PHS is fully described by\footnote{Vectors~$\bm a$ and vector-valued functions~$\bm f(\cdot)$ are denoted with bold characters. Matrices are described with capital letters. $I_n$ is the $n$-dimensional identity matrix and $0_n$ the zero matrix. The expression~$A_{:,i}$ denotes the i-th column of $A$. For a positive semidefinite matrix $\Lambda$, $\|x - y\|_{\Lambda}^2 = (x - y)^\top \Lambda (x-y)$.  $\R_{>0}$ denotes the set of positive real numbers, while $\R_{\geq 0}$ is the set of non-negative real numbers. $\C^i$ denotes the class of $i$-th times differentiable functions. The operator $\nabla_\x$ with $\x\in\R^n$ denotes $[\frac{\partial}{\partial x_1},\ldots,\frac{\partial}{\partial x_n}]^\top$.}
\begin{align}
\begin{split}
\dx&=[J(\x)-R(\x)]\nabla_\x H( \x)+G(\x)\u\\
\y&=G(\x)^\top \nabla_\x H(\x),\label{for:pch}
\end{split}
\end{align}
with the state $\x(t)\in\R^n$ (also called energy variable) at time $t\in\R_{\geq 0}$, the total energy represented by a smooth function $H\colon\R^n\to\R$ called the  Hamiltonian, and the I/O ports $\u(t),\y(t)\in\R^m$.
\begin{rem}
We focus here on \emph{input-state-output} PHSs where there are no additional algebraic constraints on the
state variables, see~\cite{cervera2007interconnection}.
\end{rem}
The matrix $J\colon\R^n\to\R^{n\times n}$ is skew-symmetric and specifies the interconnection structure and the matrix $R\colon\R^n\to\R^{n\times n},\R=\R^\top\succeq 0$ specifies the dissipation in the system. Interaction with the environment is defined by the matrix $G\colon\R^n\to\R^{n\times m}$. The structure of the interconnection matrix~$J$ is typically derived from kinematic constraints in mechanical systems, Kirchhoff’s laws, power transformers, gyrators, etc.  Loosely speaking, the interconnection of the elements in the PHS is defined by $J$, whereas the Hamiltonian $H$ characterizes their dynamical behavior. The port variables~$\u$ and $\y$ are conjugate variables in the sense that their duality product defines the power flows exchanged with the environment of the system, for instance, currents and voltages in electrical circuits or forces and velocities in mechanical systems, see~\cite{van2000l2} for more information on PHS.
\begin{rem}
PHSs are a generalization of classical Hamiltonian systems but with the capability of including dissipation, input/output ports, and non-canonical coordinates. Thus, any Hamiltonian system can be represented by a PHS~\cref{for:pch} with
\begin{align*}
    J(\x)=\begin{bmatrix} 0 & I\\-I & 0\end{bmatrix},\,R=0,\,G=0
\end{align*}
\end{rem}
The class of PHS extends beyond the kinds of physical systems from which it was originally motivated. For example, many nonlinear systems with an asymptotically stable equilibrium point can be transformed into a PHS \cite{ortega2002interconnection}. 
\subsection{Problem Setting}\label{sec:ps}
We consider the problem of designing a control law for a partially unknown physical system whose dynamics can be written in Port-Hamiltonian form~\cref{for:pch}. We assume that we have access to noisy observations $\tilde{\x}(t)\in\R^n$ of the system state $\x(t)\in\R^n$ whose evolution over time $t\in\R_{\geq 0}$ follows 
\begin{align}
\label{for:pchobs}
        \dx(t)=[J(\x)-R(\x)]\nabla_\x H( \x)+G(\x)\u(t)
\end{align} 
starting at $\x(0)\in\R^n$. The Hamiltonian $H\in\C^\infty$ is assumed to be \emph{completely unknown} due to unstructured uncertainties in the system typically imposed by nonlinear springs, physical coupling effects, or highly nonlinear electrical and magnetic fields. The parametric structures of the interconnection matrix $J\colon\R^n\to\R^{n\times n}$, dissipation matrix $R\colon\R^n\to\R^{n\times n}$ and I/O matrix $G\colon\R^n\to\R^{n\times m}$ are assumed to be known, but the parameters themselves might be unknown. Given a dataset of timestamps $\{t_i\}_{i=1}^N$ and noisy state observations with inputs, $\{\tilde \x(t_i),\bm{u}(t_i)\}_{i=1}^N$, our aim is to learn a control law $\bm{u}_c\colon\R_{\geq 0}\to\R^m$ that renders the system~\cref{for:pch} in the closed-loop to a desired PHS given by
\begin{align}\label{for:pchmodel}
\begin{split}
        \dx&=[J_d(\x)-R_d(\x)]\nabla_\x H_d( \x).\\
\end{split}
\end{align}
The dataset $\tilde \x(t_i)$ is assumed to be generated according to $\tilde \x(t_i) = \x(t_i) + \bm{\eta}$ where $\x(t)$ comes from the system~\cref{for:pchobs} with zero-mean Gaussian noise $\bm{\eta}\sim\mathcal{N}(\bm{0},\diag[\sigma_1,\ldots,\sigma_n])$. The variances $\sigma_1,\ldots,\sigma_n\in\R_{\geq 0}$ might be unknown.

\section{Control of PHS}
\label{sec:ctrl}
The general idea of GPC-PHS is depicted in~\cref{fig:gpcphs}. First, we collect (noisy) state measurements from the physical system, which are then used to train a GP-PHS model. Due to its Bayesian nature, this model provides not only a prediction of the dynamics of the physical system but also uncertainty quantification. The mean prediction is used to design a control law based on IDA-PBC. The uncertainty prediction of the GP-PHS model is leveraged to make the control law robust against the model mismatch. This allows us to guarantee the desired closed-loop behavior under mild assumptions. We start with a brief review of GP-PHS followed by presenting the IDA-PBC control law and the probabilistic guarantees.
\subsection{Gaussian Process Port-Hamiltonian System}\label{sec:GPIntro}
A GP-PHS, introduced in~\cite{9992733}, is a probabilistic model for learning partially unknown PHS based on state measurements. The main idea is to model the unknown Hamiltonian with a GP while treating the parametric uncertainties in $J,R$ and $G$ as hyperparameters, see~\cref{fig:gpphs}. A Gaussian process $\mathcal{GP}(m_{\mathrm{GP}}(\bm{x}), k(\bm{x},\bm{x}^\prime)$ is a stochastic process on some set $\mathcal{X} \subseteq \R^n$ where any finite collection of points $\bm{x}^1,\ldots,\bm{x}^L\in\X$ follows a multivariate Gaussian distribution
\begin{align*}
    \begin{bmatrix}
        f(\bm{x}^1)\\\vdots\\f(\bm{x}^L)
    \end{bmatrix}
    \!\sim\mathcal{N}\!\left(\!\begin{bmatrix}
        m(\bm{x}^1)\\\vdots\\m(\bm{x}^L)
    \end{bmatrix}\!,\!
    \begin{bmatrix}
        k(\bm{x}^1,\bm{x}^1) & \!\ldots\! & k(\bm{x}^1,\bm{x}^L)\\
        \vdots  & \!\ddots\! & \vdots\\ 
        k(\bm{x}^L,\bm{x}^1) & \!\ldots\! & k(\bm{x}^L,\bm{x}^L)
    \end{bmatrix}\!\right)
\end{align*}
with mean function $m_{\mathrm{GP}}: \R^n \rightarrow \R$, kernel function $k: \R^n \times \R^n \rightarrow \R$, and sample $f \sim \mathcal{GP}(m_{\mathrm{GP}}, k)$. By leveraging that GPs are closed under affine operations, the dynamics of a PHS \cref{for:pch} is integrated into the GP by
\begin{align}
    \dx&\sim \GP({\hat G}(\x\mid\bm{\varphi}_G)\u,k_{phs}(\x,\x^\prime)),\label{for:gpphs}
\end{align}
where the new kernel function $k_{phs}$ is given by
\begin{align*}
    k_{phs}(\x,\x^\prime)&=\sigma_f^2\hat{J}_R(\x\mid \bm{\varphi}_J,\bm{\varphi}_R)\Pi(\x,\x^\prime)\hat{J}_R^\top(\x^\prime\mid \bm{\varphi}_J,\bm{\varphi}_R)\notag\\
    \Pi_{i,j}(\x,\x^\prime) &= \frac{\partial }{\partial z_i \partial z_j}\exp(-\|\z- \z^\prime\|_{\Lambda}^2)\Big\vert_{\z=\x,\z^\prime=\x^\prime}
\end{align*}
with the Hessian $\Pi\colon\R^n\times\R^n\to\R^{n \times n}$ of the squared exponential kernel, see~\cite{rasmussen2006gaussian}. Thus, the dynamics~\cref{for:gpphs} describes a prior distribution over PHS. The matrices $J,R$ and $G$ of the PHS system~\cref{for:pchobs} are estimated by $\hat{J}_R(\x\mid \bm{\varphi}_J,\bm{\varphi}_R)=\hat{J}(\x\mid \bm{\varphi}_J)-\hat{R}(\x\mid \bm{\varphi}_R)$ and $\hat{G}(\bm{x}\mid \bm{\varphi}_G)$. The unknown set of parameters is described by $\bm{\varphi}_J\in\Phi_J\subseteq\R^{n_{\varphi_J}},{n_{\varphi_J}}\in\N$ for the estimated interconnection matrix $\hat{J}(x\vert\bm{\varphi}_J)\in\R^{n\times n} $, $\bm{\varphi}_R\in\Phi_R\subseteq\R^{n_{\varphi_R}},{n_{\varphi_R}}\in\N$ for the estimated dissipation matrix $\hat{R}(x\vert\bm{\varphi}_R)\in\R^{n\times n}$ and $\bm{\varphi}_G\in\Phi_G\subseteq\R^{n_{\varphi_G}},{n_{\varphi_G}}\in\N$ for the estimated I/O matrix $\hat{G}(x\vert\bm{\varphi}_G)\in\R^{n\times m}$. Together with the signal noise $\sigma_f\in\R_{>0}$, the lengthscales $\Lambda=\diag(l_1^2,\ldots,l_n^2)\in\R_{>0}^{n}$ of the kernel $k_{phs}$, the parameter vectors $\bm{\varphi}_J,\bm{\varphi}_R,\bm{\varphi}_G$ are treated as hyperparameters.

We start the training of the GP-PHS by using the collected dataset of timestamps $\{t_i\}_{i=1}^N$ and noisy state observations with inputs $\{\tilde \x(t_i),\bm{u}(t_i)\}_{i=1}^N$ of~\cref{for:pchobs} in a filter to create a dataset consisting of pairs of states $ X=[\x(t_1),\ldots,\x(t_N)]\in\R^{n\times N}$ and state derivatives $\dot{X}=[\dx(t_1),\ldots,\dx(t_N)]\in\R^{n\times N}$. 
Then, the unknown (hyper)parameters $\bm\varphi$ can be computed by minimization of the negative log marginal likelihood $-\log \prob(\dot{X}\vert \varphi,X)\sim\dot{X}_0^\top K_{phs}^{-1} \dot{X}_0+\log\vert K_{phs} \vert$, with the mean-adjusted output data  $\dot{X}_0=[[\dx(t_1)-\hat{G}\bm{u}(t_1)]^\top,\ldots,[\dx(t_{N})-\hat{G}\bm{u}(t_{N})]^\top]^\top$. Once the GP model is trained, we can compute the posterior distribution using the joint distribution with mean-adjusted output data $\dot{X}_0$ at a test states $\x^*\in\R^n$
\begin{align}
    \begin{bmatrix}\dot{X}_0\\ \dx \end{bmatrix}\!=\!\mathcal{N}\left(\bm{0},\begin{bmatrix}K_{phs} & k_{phs}(X,\x^*)\\k_{phs}(X,\x^*)^\top & k_{phs}(\x^*,\x^*)\end{bmatrix}\right)\notag.
\end{align}
Analogously to vanilla GP regression, the posterior distribution is then fully defined by the mean $\mu\left(\dx\!\mid\!\x^{*}, \D\right)$ and the variance $\var\left(\dx\!\mid\!\x^{*}, \D\right)$. For more detailed information on GP-PHS see~\cite{9992733}. 
\subsection{IDA-PBC for Bayesian Port-Hamiltonian systems}
In this section, we propose the data-driven based control approach for partially unknown PHS leveraging IDA-PBC and GP-PHS. IDA-PBC was introduced 
\begin{figure}[t]
\begin{center}
\vspace{0.2cm}
	\input{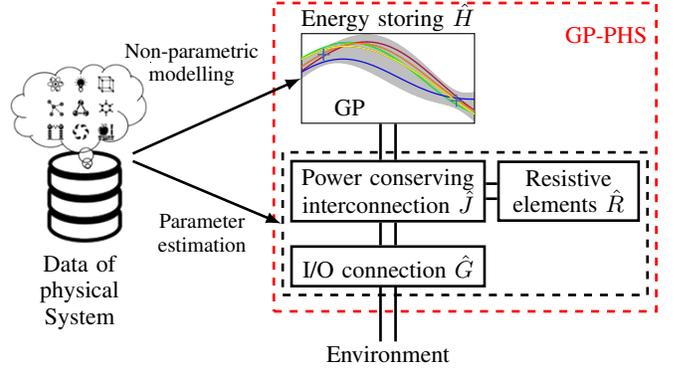}
	\vspace{-0.6cm}\caption{Block diagram of a Gaussian Process Port-Hamiltonian system.}\vspace{-0.6cm}
	\label{fig:gpphs}
\end{center}
\end{figure}
in~\cite{ortega1999energy,ortega2002interconnection} as a procedure to control physical systems described by PHS models such that the behavior of the closed-loop follows a desired PHS. Assume that there is a skew-symmetric matrix $J_d\colon\X\to\R^n$, a positive semidefinite diagonal matrix $R_d\colon\X\to\R^n$, and a function $H_d\colon\X\to\R$ on a compact set $\X\subseteq\R^n$ that satisfies the PDE
\begin{align}
    G^\perp(\x)[J(\x)-R(\x)]\nabla H\!=\!G^\perp(\x)[J_d(\x)-R_d(\x)]\nabla H_d,\label{for:pde}
\end{align}
where $G^\perp(\x)$ is a full-rank left annihilator of $G(\x)$.
\begin{defn}
A matrix function $G^\perp\colon\X\to\R^{(n-m)\times n}$ is a full-rank left annihilator of $G\colon\X\to\R^{n\times m}$ on $\X$ if $G(\x)^\perp G(\x)=\bm{0}$ and $\rank G(\x)^\perp=n-m$ for all $\x\in\X$. 
\end{defn}
\begin{assum}\label{ass:3}
The desired Hamiltonian $H_d$ is such that $\x_d=\arg\min H_d(\x)$, where $\x_d\in\X$ is the equilibrium to be stabilized.
\end{assum}
Then, under~\cref{ass:3}, the PCH system~\cref{{for:pch}} with control input $\bm{u}=\bm{\beta}(\x)$, where
\begin{align*}
    \bm{\beta}(\x)=&[G^\top(\x)G(\x)]^{-1}G^\top(\x)\notag\\
    &([J_d(\x)-R_d(\x)]\nabla H_d-[J(\x)-R(\x)]\nabla H)
\end{align*}
leads to the PHS system
\begin{align}
    \dx=J_d(\x)-R_d(\x)]\nabla H_d,\label{for:clPCH}
\end{align}
with a stable equilibrium at $\x_d$.
\begin{rem}
For mechanical systems, IDA-PBC control is equivalent to the theory of controlled Lagrangians~\cite{van2000l2}. Furthermore, for a review on the connection between optimal control and PBC, the reader is referred to \cite{vu2018connection}
\end{rem}
However, IDA-PBC requires full knowledge of the governing system equation~\cref{for:pchobs} which is typically hard to obtain, see~\cref{sec:ps}. In the following, we introduce an IDA-PBC control approach using the data-driven GP-PHS model instead of the unknown system dynamics. To overcome stability issues due to the model mismatch, the controller is robustified based on the uncertainty of the GP-PHS model. Before we introduce the main result, we impose the following assumption.
\begin{assum}\label{ass:1}
   There exists a constant $p\in(0,1)$ such that $P(|\mu\left(\dot{x}_i\!\mid\!\x, \D\right)-(J(\x)-R(\x))\nabla H(\x)|\leq \beta_i \var\left(\dot{x}_i\!\mid\!\x, \D\right),\forall i\in\{1,\ldots,n\},x\in\X\})\geq 1-p$
\end{assum}
\Cref{ass:1} is a standard assumption in GP learning which limits the class of unknown functions to be learned to the class of functions that the GP model can represent. See, for example,~\cite{srinivas2012information} for further information. 
\begin{thm}\label{thm:1}
Let~\cref{for:gpphs} be a GP-PHS model of the physical system~\cref{for:pchobs} based on the dataset $\D$ and let~\cref{ass:1,ass:3} be fulfilled. Assume that there is a skew-symmetric matrix $J_d\colon\X\to\R^n$, a positive semidefinite diagonal matrix $R_d\colon\X\to\R^n$, and a function $H_d\colon\X\to\R$ on a compact set $\X\subseteq\R^n$ that satisfy
\begin{align}
    \hat{G}^\perp(\x)\mu\left(\dx\!\mid\!\x, \D\right) =\hat{G}^\perp(\x)[J_d(\x)-R_d(\x)]\nabla H_d\label{for:spde}
\end{align}
where $H_d,R_d$ are designed such that
\begin{align}
    [\nabla H_d]^\top \bm{\eta}(\x)\leq[\nabla H_d]^\top R_d(\x) \nabla H_d\label{for:ineq}
    \end{align}
for all $\{\bm\eta(\x)\in\R^n| |\eta_i(\x)|\leq \beta_i  \var\left(\dot{x}_i\mid\x, \D\right),\,\forall i\in\{1,\ldots,n\},\x\in\X\}$. Then, the control input
\begin{align}
    \bm{u}(\x)=&[\hat{G}^\top(\x)\hat{G}(\x)]^{-1}\hat{G}^\top(\x)\notag\\
    &([J_d(\x)-R_d(\x)]\nabla H_d-\mu\left(\dx\!\mid\!\x, \D\right)\label{for:ctrl}
\end{align}
for the PHS~\cref{for:pchobs} leads to a closed-loop system with a stable equilibrium $\x_d$ on $X$ with probability $(1-p)$.
\end{thm}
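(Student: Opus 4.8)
The plan is to substitute the proposed control law \cref{for:ctrl} into the true plant \cref{for:pchobs}, reduce the closed loop to a perturbed copy of the desired PHS \cref{for:pchmodel}, and then run a Lyapunov argument with $H_d$ as the storage function, in which the design inequality \cref{for:ineq} together with \cref{ass:1} neutralizes the perturbation with probability $1-p$. First I would insert $\u(\x)$ from \cref{for:ctrl} into \cref{for:pchobs}. Writing $\hat\Delta(\x):=[J_d-R_d]\nabla H_d-\mu(\dx\mid\x,\D)$, the input term becomes $G(\x)[\hat G^\top \hat G]^{-1}\hat G^\top \hat\Delta$. Taking $G=\hat G$ (the parametric structure of $G$ is assumed known, so I treat the I/O map as identified), this is the orthogonal projector $P_{\hat G}=\hat G[\hat G^\top \hat G]^{-1}\hat G^\top$ applied to $\hat\Delta$. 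The matching condition \cref{for:spde} reads $\hat G^\perp \hat\Delta=0$; since $\hat G^\perp$ has rank $n-m$ and annihilates the columns of $\hat G$, its null space coincides with $\mathrm{range}(\hat G)$, so $\hat\Delta\in\mathrm{range}(\hat G)$ and hence $P_{\hat G}\hat\Delta=\hat\Delta$. Substituting and defining the model error $\bm\eta(\x):=[J(\x)-R(\x)]\nabla H-\mu(\dx\mid\x,\D)$, the closed loop collapses to $\dx=[J_d-R_d]\nabla H_d+\bm\eta(\x)$, i.e. the desired PHS perturbed by the GP mean error.

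Next I would take the (shifted) desired Hamiltonian $H_d$ as a Lyapunov candidate; by \cref{ass:3} it attains its minimum at $\x_d$, so $\nabla H_d(\x_d)=0$ and $H_d$ is locally positive definite about $\x_d$. Differentiating along the closed loop gives $\dot H_d=[\nabla H_d]^\top([J_d-R_d]\nabla H_d+\bm\eta)$. Skew-symmetry of $J_d$ annihilates the $J_d$-quadratic form, and positive semidefiniteness of $R_d$ makes the dissipation term nonpositive, leaving $\dot H_d=-[\nabla H_d]^\top R_d \nabla H_d+[\nabla H_d]^\top \bm\eta(\x)$.

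The final step is the probabilistic bound. By \cref{ass:1}, with probability at least $1-p$ the error satisfies $|\eta_i(\x)|\le \beta_i\var(\dot{x}_i\mid\x,\D)$ for all $i$ and all $\x\in\X$, so $\bm\eta(\x)$ lies in the set over which \cref{for:ineq} was imposed; therefore $[\nabla H_d]^\top\bm\eta\le[\nabla H_d]^\top R_d \nabla H_d$ and $\dot H_d\le 0$ on $\X$ with probability $1-p$. Combined with the positive-definiteness of $H_d$ near $\x_d$, Lyapunov's direct method yields stability of $\x_d$ with the claimed probability.

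Two points deserve care and are where I expect the real obstacle. First, the algebraic reduction hinges on correctly identifying $\mathrm{null}(\hat G^\perp)=\mathrm{range}(\hat G)$ and on $G=\hat G$; any residual I/O mismatch $(G-\hat G)\u$ would have to be absorbed into $\bm\eta$ and re-bounded, which breaks the clean cancellation. Second, strictly speaking $\x_d$ is an equilibrium of the closed loop only if $\bm\eta(\x_d)=0$, yet \cref{for:ineq} evaluated at $\x_d$ only gives the vacuous $0\le 0$; the cleanest route is therefore to read the guarantee through the non-increase of $H_d$, so that trajectories remain trapped in sublevel sets of $H_d$ about $\x_d$, which is the precise sense in which stability holds with probability $1-p$.
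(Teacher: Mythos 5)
Your proposal is correct and follows essentially the same route as the paper: rewrite the closed loop as the desired PHS perturbed by the GP model error $\bm\eta$, then use $H_d$ as a Lyapunov function so that \cref{for:ineq} and \cref{ass:1} give $\dot H_d\leq 0$ on $\X$ with probability $1-p$. You are in fact more explicit than the paper on the algebraic reduction (the projection argument showing $P_{\hat G}\hat\Delta=\hat\Delta$ via the matching condition, and the implicit $G=\hat G$ identification), which the paper compresses into ``the control input applied to the perturbed system leads to the desired dynamics if the PDE holds.''
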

\begin{proof}
We can rewrite the partially unknown PHS~\cref{for:pchobs}
\begin{align}
    \dx=[J(\x)-R(\x)]\nabla H+G(\x)\bm{u}\label{prf:phs}
\end{align}
in terms of the GP-PHS model as perturbed system
\begin{align}
    \dx=\mu\left(\dx\!\mid\!\x, \D\right)+\hat G(\x)\bm{u}+\bm{\eta}(\x),\label{prf:uphs}
\end{align}
with perturbation $\bm{\eta}$. Under~\cref{ass:1}, the GP-PHS model allows us to upper-bound the uncertainty by $P(|\eta_i(\x)|\leq \beta_i  \var\left(\dot{x}_i\mid\x, \D\right),\,\forall i\in\{1,\ldots,n\})\geq (1-p)$. As consequence, there exists a $\bm{\eta}$ such that \cref{prf:phs} equals~\cref{prf:uphs}. Following the idea of IDA-PBC, the control input~\cref{for:ctrl} applied to~\cref{prf:uphs} leads to
\begin{align}
    \dx=[J_d(\x)-R_d(\x)]\nabla H_d+\bm\eta(\x),\label{prf:duphs}
\end{align}
if the PDE~\cref{for:spde} holds. Finally, we choose $H_d$ as a Lyapunov-like function to prove that $\x_d$ is a stable equilibrium of the perturbed PHS~\cref{prf:duphs}. The evolution of $H_d$ is given by
\begin{align*}
    \dot{H}_d&=[\nabla H_d]^\top (J_d(\x)-R_d(\x)) \nabla H_d+[\nabla H_d]^\top \bm{\eta}(\x)\\
    &=-[\nabla H_d]^\top R_d(\x) \nabla H_d+[\nabla H_d]^\top \bm{\eta}(\x).
\end{align*}
With~\cref{for:ineq}, that leads to $P(\dot{H}_d\leq 0)\geq 1-p$, which concludes the proof.
\end{proof}
\begin{cor}
The equilibrium $\x_d$ will be asymptotically stable with probability $1-p$ if, in addition to~\cref{thm:1}, $\x_d$ is an isolated
minimum of $H_d$ and the largest invariant set under the closed-loop dynamics \cref{for:clPCH} contained in
\begin{align*}
    \{\x\in\X\vert[\nabla H_d]^\top R_d(\x)\nabla H_d=0\}
\end{align*}
equals the desired equilibrium $\{\x_d\}$. 
\end{cor}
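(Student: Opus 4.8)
The plan is to upgrade the Lyapunov (marginal) stability established in~\cref{thm:1} to asymptotic stability by means of an invariance-principle argument of LaSalle/Barbashin--Krasovskii type, carried out on the same high-probability event on which~\cref{thm:1} holds. First I would condition on the event of probability at least $1-p$ on which the uncertainty bound of~\cref{ass:1} is valid; on this event the closed loop obeys the perturbed desired dynamics~\cref{prf:duphs} with $H_d$ acting as a Lyapunov function satisfying $\dot{H}_d\leq 0$. Because $\x_d$ is now assumed to be an \emph{isolated} minimum of $H_d$, the shifted function $H_d(\x)-H_d(\x_d)$ is locally positive definite, so $H_d$ is a genuine Lyapunov candidate on a compact sublevel-set neighborhood $\Omega_c\subseteq\X$ of $\x_d$; trajectories started in $\Omega_c$ remain there and stay bounded.

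Next I would invoke the invariance principle for the autonomous closed-loop vector field. Since $\dot{H}_d\leq 0$ on $\Omega_c$, every trajectory converges to the largest invariant set $M$ contained in $E=\{\x\in\Omega_c\mid \dot{H}_d(\x)=0\}$. Along any trajectory confined to $M$ the value of $H_d$ is constant, hence $\dot{H}_d\equiv 0$; writing $\dot{H}_d=-[\nabla H_d]^\top R_d(\x)\nabla H_d+[\nabla H_d]^\top\bm{\eta}(\x)$ and using the design inequality~\cref{for:ineq} together with $R_d\succeq0$, I would argue that on $M$ both the dissipation term $[\nabla H_d]^\top R_d(\x)\nabla H_d$ and the perturbation work $[\nabla H_d]^\top\bm{\eta}(\x)$ must vanish. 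Consequently the flow on $M$ reduces to the nominal desired dynamics~\cref{for:clPCH} restricted to $\{\x\in\X\mid[\nabla H_d]^\top R_d(\x)\nabla H_d=0\}$, and the standing hypothesis that the largest invariant set of~\cref{for:clPCH} inside this zero-dissipation set equals $\{\x_d\}$ forces $M=\{\x_d\}$.

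Finally I would conclude that every trajectory in $\Omega_c$ converges to $\x_d$; combined with the Lyapunov stability already furnished by~\cref{thm:1}, this gives asymptotic stability of $\x_d$. As the entire argument is conditioned only on the validity of~\cref{ass:1}, it holds with probability at least $1-p$.

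The step I expect to be the main obstacle is pinning both terms of $\dot{H}_d$ to zero on the limit set $M$, i.e.\ passing from the \emph{perturbed} closed loop back to the \emph{nominal} dynamics~\cref{for:clPCH} to which the invariant-set hypothesis refers. By itself, $\dot{H}_d\equiv0$ on $M$ only yields $[\nabla H_d]^\top R_d(\x)\nabla H_d=[\nabla H_d]^\top\bm{\eta}(\x)$, and~\cref{for:ineq} merely guarantees the right-hand side does not exceed the left, so the common value is not immediately forced to be zero. I would close this gap by exploiting that $H_d$ is constant and bounded below along $M$ so the perturbation can do no net work, or, failing a clean argument, by stating the corollary with the invariant-set condition imposed directly on the perturbed closed loop~\cref{prf:duphs} rather than on~\cref{for:clPCH}.
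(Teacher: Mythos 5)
Your route is genuinely different from the paper's: the paper disposes of this corollary in one line by citing Proposition~1 of the original IDA-PBC paper~\cite{ortega2002interconnection}, which handles the \emph{nominal} closed loop~\cref{for:clPCH}, whereas you reconstruct the underlying Barbashin--Krasovskii/LaSalle argument explicitly and try to carry it out on the \emph{perturbed} closed loop~\cref{prf:duphs}. Your skeleton (condition on the event of~\cref{ass:1}, use the isolated minimum to get a compact positively invariant sublevel set, apply the invariance principle, identify the limit set with $\{\x_d\}$) is exactly the standard argument behind the cited proposition, so in that sense you have supplied the proof the paper outsources.

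The obstacle you flag at the end is, however, a genuine gap, and you are right that it does not close as stated. On the limit set one only gets $[\nabla H_d]^\top R_d(\x)\nabla H_d=[\nabla H_d]^\top\bm{\eta}(\x)$, and~\cref{for:ineq} is a non-strict inequality, so nothing forces the common value to zero: the realized perturbation could saturate the bound at a point where the dissipation is strictly positive, making the set $\{\dot H_d=0\}$ of the perturbed system strictly larger than $\{[\nabla H_d]^\top R_d\nabla H_d=0\}$, which is the set to which the corollary's invariance hypothesis refers. Your fallback of a ``no net work'' argument does not rescue this, since constancy of $H_d$ along $M$ is exactly the information already used. The honest fixes are the ones you name in your last sentence --- impose the largest-invariant-set condition on the perturbed dynamics~\cref{prf:duphs}, or strengthen~\cref{for:ineq} to a strict inequality wherever $[\nabla H_d]^\top R_d\nabla H_d>0$ --- and note that the paper's one-line citation quietly skips over the same issue, since the cited proposition is proved for the unperturbed closed loop.
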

\begin{proof}
    That is a direct consequence of Proposition 1 in~\cite{ortega2002interconnection}.
\end{proof}
An estimate of its domain of attraction
is given by the largest bounded level set $\{ \x\in\X\vert H_d(\x)\leq c\}$. Higher noise in the state measurements would increase the bound in~\cref{ass:1} so that the controller needs to be "more robust" by designing $H_d$ and $R_d$ to satisfy condition~\cref{for:ineq}. So far,~\cref{thm:1} states that the control law~\cref{for:ctrl} ensures that $\x_d$ is a stable equilibrium of the closed-loop system. In addition, closed-loop behavior follows the desired PHS \cref{for:pchmodel} affected by a perturbation $\bm{\eta}$ that depends on the uncertainty of the GP-PHS model. Next, we show that for an increasing amount of training data, the closed-loop behavior converges to the desired PHS.
\begin{lem}\label{lem:1}
In addition to~\cref{thm:1}, if the number of data points $N\to\infty$, where $\x(t_1)\neq \x(t_2)\neq \ldots \neq \x(t_N)$ with $\x(t_i)\in\X$, then the control law~\cref{for:ctrl} applied to~\cref{prf:uphs} leads to the desired closed-loop dynamics $\dx=[J_d(\x)-R_d(\x)]\nabla H_d$ with probability $(1-p)$.
\end{lem}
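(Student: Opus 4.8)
The plan is to leverage the closed loop already derived in the proof of~\cref{thm:1}. There, applying the control law~\cref{for:ctrl} to the GP-PHS representation~\cref{prf:uphs} yields~\cref{prf:duphs}, i.e.\ $\dx=[J_d(\x)-R_d(\x)]\nabla H_d+\bm{\eta}(\x)$. The only discrepancy between this closed loop and the target dynamics $\dx=[J_d(\x)-R_d(\x)]\nabla H_d$ is the perturbation $\bm{\eta}$, which by~\cref{ass:1} obeys $|\eta_i(\x)|\leq\beta_i\var(\dot{x}_i\mid\x,\D)$ for all $\x\in\X$ with probability at least $1-p$. Hence the entire statement reduces to showing that the GP-PHS posterior variance $\var(\dot{x}_i\mid\x,\D)$ vanishes for every $\x\in\X$ as $N\to\infty$; the convergence of the closed loop then follows immediately with probability $1-p$.

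First I would record that the posterior variance is monotonically non-increasing under the addition of training points, which is a direct consequence of the Schur-complement form of the conditional Gaussian covariance. Next I would observe that the hypothesis places infinitely many pairwise-distinct states $\x(t_i)$ in the compact set $\X$, so that the fill distance $h_N=\sup_{\x\in\X}\min_{i\leq N}\Verts{\x-\x(t_i)}$ can be driven to zero. The core step is then to invoke the standard scattered-data argument: because $k_{phs}$ is obtained from the squared-exponential kernel through the affine construction in~\cref{for:gpphs}, it inherits the characteristic (universal) property of its generating kernel, and the posterior variance is bounded above by a monotone function of $h_N$ that tends to zero as $h_N\to 0$; see the information-gain bounds in~\cite{srinivas2012information}. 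Combining monotonicity with the fill-distance bound gives $\var(\dot{x}_i\mid\x,\D)\to 0$, so that $\bm{\eta}(\x)\to\bm{0}$ and $\dx\to[J_d(\x)-R_d(\x)]\nabla H_d$ with probability $1-p$.

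The hard part will be making the variance-convergence step fully rigorous under the stated hypotheses. The assumption only guarantees that the training states are distinct and lie in $\X$, yet a sequence of distinct points in a compact set may accumulate at a single location rather than becoming dense; I would therefore need to either strengthen the hypothesis to space-filling sampling ($h_N\to 0$) or restrict the conclusion to convergence at the states where the control is actually evaluated. A secondary subtlety is the additive measurement noise of~\cref{for:pchobs}: a single noisy observation cannot null the posterior variance, so I would rely on the fact that densely clustered noisy samples still force the variance to zero because their information content accumulates, which is precisely what the fill-distance estimate captures. Finally, I would verify that the factors $\hat{J}_R$ and $\hat{G}$ entering $k_{phs}$ remain bounded and of full rank on $\X$, ensuring that convergence of the underlying squared-exponential variance transfers to the structured GP-PHS variance.
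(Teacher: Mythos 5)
Your proposal follows essentially the same route as the paper: the paper's proof is a two-line argument that asserts $\var\left(\dot{x}_i\mid\x, \D\right)\to 0$ for all $\x\in\X$ as $N\to\infty$ (citing~\cite{umlauft:TAC2020} for this fact), concludes $P(\Vert\bm\eta(\x)\Vert=0,\,\forall\x\in\X)\geq 1-p$ from the bound in~\cref{ass:1}, and reads off the desired closed loop from~\cref{prf:duphs}. Your reduction to the variance-vanishing claim is identical; where you differ is that you try to actually prove that claim via monotonicity of the posterior variance and a fill-distance argument, whereas the paper delegates it to a citation. In doing so you have correctly put your finger on the weak point: the hypothesis that the $\x(t_i)$ are pairwise distinct and lie in the compact set $\X$ does not imply that the fill distance $h_N$ tends to zero (the samples may accumulate at a single point), nor does it address the fact that noisy observations only drive the variance to zero where samples accumulate. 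This gap is present in the paper's proof as well --- it is absorbed into the reference to~\cite{umlauft:TAC2020}, whose variance-decay result requires a denseness-type sampling condition that the lemma's stated hypothesis does not guarantee. So your proposal is faithful to the paper's argument, and your flagged ``hard part'' is a genuine issue with the lemma as stated rather than a defect of your approach; closing it would require strengthening the hypothesis to $h_N\to 0$ (dense sampling of $\X$), exactly as you suggest.
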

\begin{proof}
    Since $\var\left(\dot{x}_i\mid\x, \D\right)=0$ for all $x\in\X$ if the number of data points tends to infinity, see~\cite{umlauft:TAC2020}, $P(\Vert\eta(\x)\Vert=0,\forall \x\in\X)\geq 1-p$. As a consequence, the closed-loop system is rendered to $\dx=[J_d(\x)-R_d(\x)]\nabla H_d$.
\end{proof}
Although analytic solutions for~\cref{for:pde} can be obtained for
some classes of PHS, solving the IDA-PBC PDE is nontrivial in general.
In~\cite{ortega2004interconnection}, the authors propose three different ways to solve the matching equation~\cref{for:pde}.

\textit{Non-Parameterized IDA:} Here, the desired
interconnection $J_d$ and dissipation $R_d$
matrices are fixed as well as $G^\perp$.
This leads to a PDE whose solutions define the
admissible energy functions $H_d$ for the given
interconnection and damping matrices. Then, we select the solution that satisfies \cref{ass:3}.

\textit{Algebraic IDA:} At the other extreme, the desired energy function is fixed which yields~\cref{for:pde} to become an algebraic equation in $J_d$, $R_d$ and $G^\perp$.

\textit{Parameterized IDA:} For some physical systems, it
is desirable to restrict the desired energy function to a
certain class. By fixing the
structure of the energy function, the matching function is transformed into a new PDE with constraints on the interconnection and damping
matrices.

 We refer interested readers to~\cite{nageshrao2015port} for a review on methods to solve the IDA-PBC PDEs.
 \begin{rem}
    Despite "classical" techniques to solve the matching equation, physics-informed neural networks have shown promising results for finding solutions of the IDA-PBC PDE, for example, see~\cite{plaza2022total}.
\end{rem}
\section{Evaluation}
\label{sec:sim}
We consider the problem of designing a control law for a nonlinear electrostatic microactuator, as depicted in~\cref{fig:micro}. The system's equations in PHS form are given by
\begin{align}\label{for:sim}
        \dx&=\underbrace{\begin{bmatrix}
            0 & 1 & 0\\ -1 & -b & 0\\0 & 0 & -\frac{1}{r}
        \end{bmatrix}}_{J(x)-R(x)}\frac{\partial H}{\partial\x}( \x)+\underbrace{\begin{bmatrix}
            0\\0\\\frac{1}{r}
        \end{bmatrix}}_{G(x)}u\\
        H(\x)&=\frac{1}{4}10(x_1-x^*)^4+\frac{1}{2m}x_2^2+\frac{x_1}{2A\epsilon}x_3^2.\notag
\end{align}
\begin{figure}[t]
\begin{center}
\vspace{0.2cm}
	\includegraphics[width=0.8\columnwidth]{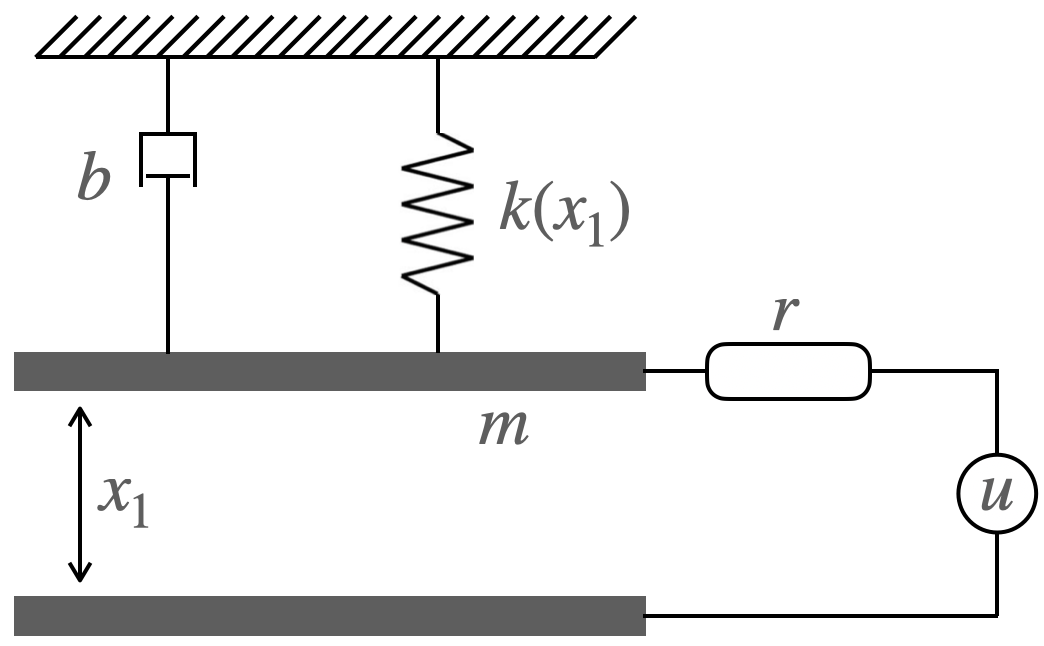}
	\caption{Electrostatic micro-actuator with nonlinear spring $k$.}\vspace{-0.6cm}
	\label{fig:micro}
\end{center}
\end{figure}
with the air gap $x_1$, the momentum $x_2$ and the charge of the device $x_3$, inspired by~\cite{maithripala2003nonlinear}.
The parameters are the plate area $A=1$, the mass of the plate $m=1$ and the permittivity in the gap $\epsilon=1$. The steady state of the air gap is $x_1^*=1$. We assume a linear damping with positive constant $b=0.5$ and a non-linear, position-dependent stiffness $k(x_1)$ for the spring. The input resistance is $r=1$ and $u$ represents the input voltage, which is the control input of the system. As stated in the problem setting, we assume that the Hamiltonian $H$ and the damping constant $b$ are \textit{unknown to us}. The goal is to stabilize the system at an equilibrium point where the air gap $x_1=0.5$.

First, we collect a set of training data for the GP-PHS model. For this purpose, we use a sinusoidal input signal, i.e., $u(t)=\sin(t)$ as excitation of the microactuator system~\cref{for:sim}. The system is initialized with $\x(0)=[0,0,1]^\top$ and 300 data pairs $\{t_i,\x(t_i)\}$ are recorded between $0\si{\milli\second}$ and $20\si{\milli\second}$ with constant time spacing, see~\cref{fig:training}. The data is corrupted by zero-mean Gaussian noise with variance $\sigma^2=0.001$. 
With this dataset, a GP-PHS model is trained according to~\cite[Algorithm 1]{9992733}. Optimization of the hyperparameters results in an estimated damping constant $\hat{b}=0.498$. A comparison between an actual system trajectory and the mean prediction of the GP-PHS model is visualized in~\cref{fig:test}. Note the performance of the GP-PHS model on an a-priori unseen state-space even though the model was only trained
\begin{figure}[b]
\begin{center}
	\begin{tikzpicture}
\begin{axis}[
  xlabel={air gap $x_1$},
  ylabel={momentum $x_2$},
  zlabel={charge $x_3$},
  xlabel shift = -10 pt,
  ylabel shift = -10 pt,
  legend pos=north west,
  width=\columnwidth,
  height=6.5cm,
  legend style={font=\footnotesize},
  legend cell align={left},
  legend style={at={(0,1.01)},anchor=north west,/tikz/every even column/.append style={column sep=0.2cm}},
  legend columns=3]
\addplot3[only marks, color=red,dashed,line width=1pt,mark=x] table [x index=1,y index=2,z index=3]{data/data.dat};
\end{axis}
\end{tikzpicture} 
	\caption{The training data for the GP-PHS model.}\vspace{-0.2cm}
	\label{fig:training}
\end{center}
\end{figure}
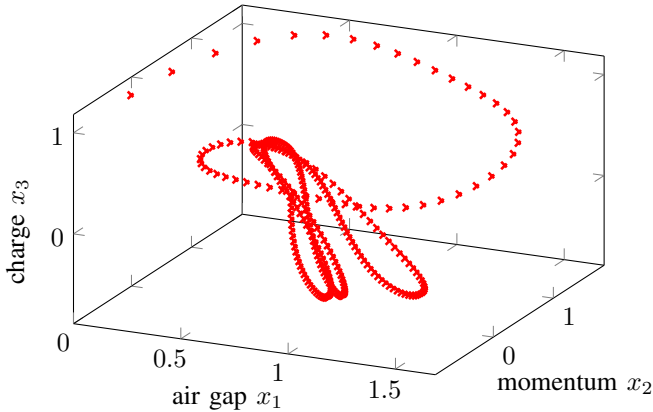
on a \textit{single} trajectory of the system.
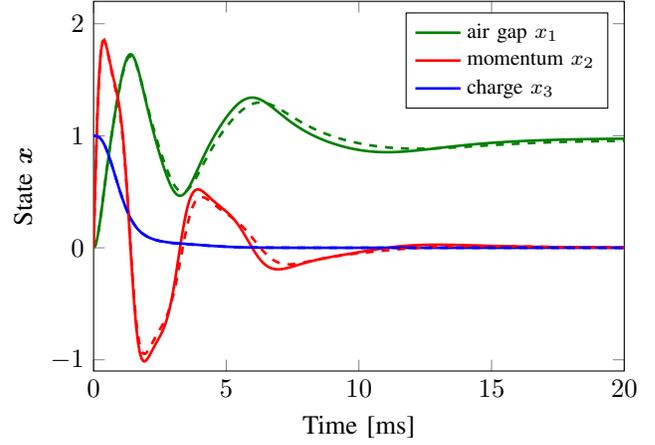
\begin{figure}[t]
\begin{center}
\vspace{0.2cm}
	\tikzsetnextfilename{pogo_system}
\begin{tikzpicture}
\begin{axis}[
  name=plot1,
  xlabel={Time [ms]},
  ylabel={State $\x$},
  legend pos=north west,
  width=\columnwidth,
  height=6.5cm,
  ymin=-1.1,
  ymax=2.2,
  xmin=0,
  xmax=20,
  legend style={font=\footnotesize},
  legend cell align={left},
  legend pos=north east]
\addplot[color=green!50!black,line width=1pt,no marks] table [x index=0,y index=1]{data/test.dat};
\addplot[color=red,line width=1pt,no marks] table [x index=0,y index=2]{data/test.dat};
\addplot[color=blue,line width=1pt,no marks] table [x index=0,y index=3]{data/test.dat};
\addplot[color=green!50!black,dashed,line width=1pt,no marks] table [x index=0,y index=4]{data/test.dat};
\addplot[color=red,dashed,line width=1pt,no marks] table [x index=0,y index=5]{data/test.dat};
\addplot[color=blue,dashed,line width=1pt,no marks] table [x index=0,y index=6]{data/test.dat};
\legend{air gap $x_1$,momentum $x_2$, charge $x_3$};
\end{axis}
\end{tikzpicture} 
	\caption{Comparison between the actual system (solid) and the mean predicition of the GP-PHS model (dashed).}\vspace{-0.6cm}
	\label{fig:test}
\end{center}
\end{figure}
Next, we use~\cref{thm:1} to derive a control law which stabilizes the system at a desired equilibrium point $\x_d$. For solving the PDE~\cref{for:spde}, the full-rank left annihilator of $G(\x)$ is determined to
\begin{align*}
    G^\perp(\x)=\begin{bmatrix}
        1 & 0 & 0\\0 & 1 & 0
    \end{bmatrix}.
\end{align*}
We follow the idea of non-parametric IDA by fixing the desired interconnection and damping matrix to
\begin{align*}
    J_d(x)-R_d(x)=\begin{bmatrix}
            0 & 1 & 0\\ -1 & -\hat{b} & 0\\0 & 0 & -\frac{1}{r_{d}}
        \end{bmatrix}
\end{align*}
 with $r_d=2/3$. As candidate for the desired Hamiltonian $H_d$, we use the posterior mean prediction for the Hamiltonian of the GP-PHS model and, inspired by~\cite{nageshrao2015port}, an additional term to adjust the equilibrium to a desired state. Thus, the desired Hamiltonian is given by
 \begin{align*}
     H_d(x)=\mu(\hat{H}\mid \x,\D)+\zeta(x_3),
 \end{align*}
 where the function $\zeta(x_3)=(x_3-c)^2$ with $c\in\R$ is determined, such that the desired equilibrium $\x_d=[0.5,c_2,c_3]$ with some $c_2,c_3\in\R$ is a minimum of $H_d(x)$ on $\X=[-2,2]^3$. For this desired Hamiltonian and $\beta_{1}=\beta_{2}=\beta_{3}=2$, we validate the PDE~\cref{for:spde} by point evaluations over a discretized state space $\X$. The top plot of \cref{fig:cl} shows the closed-loop dynamics with the proposed control law~\cref{for:ctrl}. As there is uncertainty in the GP-PHS model, the closed-loop trajectory (solid) deviates slightly from the desired PHS dynamics (dashed) $\dx=(J_d(x)-R_d(x))\nabla H_d(x)$.
 
However, according to~\cref{thm:1}, the desired equilibrium remains stable. The bottom plot supports this result as the desired Hamiltonian evaluated on the state of the closed-loop system decreases over time, as claimed by~\cref{for:ineq}. Finally, we compare the performance of the closed-loop over the number of data points.~\Cref{fig:moredata} shows that for an increasing amount of data, the closed-loop dynamics converges to the dynamics defined by the desired PHS.
\begin{figure}[ht]
\begin{center}
\vspace{0.2cm}
	\tikzsetnextfilename{cl0}
\begin{tikzpicture}
\begin{axis}[
  name=plot1,
  ylabel={State $\x$},
  legend pos=north west,
  width=\columnwidth,
  height=5.8cm,
  xmin=0,
  xmax=13,
  xticklabels={},
  legend style={font=\footnotesize},
  legend cell align={left},
  legend pos=north east]
\addplot[color=green!50!black,line width=1pt,no marks] table [x index=0,y index=4]{data/control.dat};
\addplot[color=red,line width=1pt,no marks] table [x index=0,y index=5]{data/control.dat};
\addplot[color=blue,line width=1pt,no marks] table [x index=0,y index=6]{data/control.dat};
\addplot[color=green!50!black,dashed,line width=1pt,no marks] table [x index=0,y index=1]{data/control.dat};
\addplot[color=red,dashed,line width=1pt,no marks] table [x index=0,y index=2]{data/control.dat};
\addplot[color=blue,dashed,line width=1pt,no marks] table [x index=0,y index=3]{data/control.dat};
\addplot[color=green!50!black,line width=1pt,mark=*] coordinates {(12.9,0.5)};
\legend{air gap $x_1$,momentum $x_2$, charge $x_3$};
\end{axis}
\begin{axis}[
  name=plot2,
   at=(plot1.below south east), anchor=above north east,
  xlabel={Time [ms]},
  ylabel={Hamiltonian $H_d$},
  legend pos=north west,
  width=\columnwidth,
  height=3.5cm,
  xmin=0,
  xmax=13,
  legend style={font=\footnotesize},
  legend cell align={left},
  legend pos=north east]
\addplot[color=black,line width=1pt,no marks] table [x index=0,y index=1]{data/Lyap.dat};
\end{axis}
\end{tikzpicture} 
	\caption{Top: Closed-loop system with the proposed control law. As there is uncertainty in the GP-PHS model, the closed-loop dynamics (solid) slightly deviates from the desired PHS (dashed). However, in line with~\cref{thm:1}, the closed-loop converges to the desired equilibrium (green dot). Bottom: The desired storage function $H_d$ is decreasing over time.}\vspace{-0.2cm}
	\label{fig:cl}
\end{center}
\end{figure}
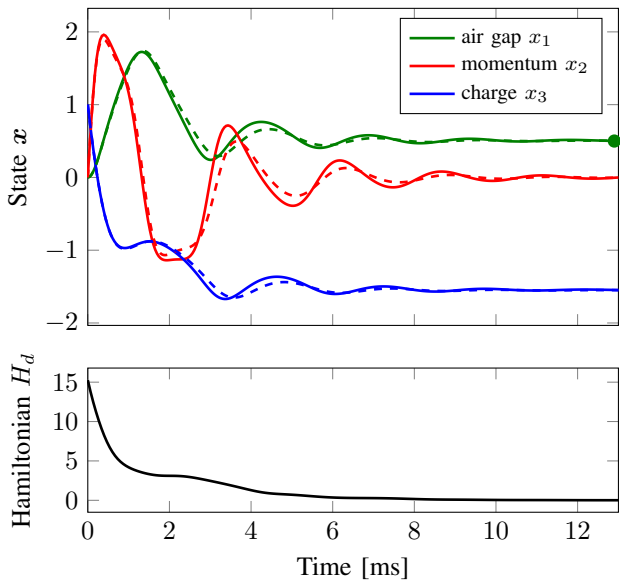
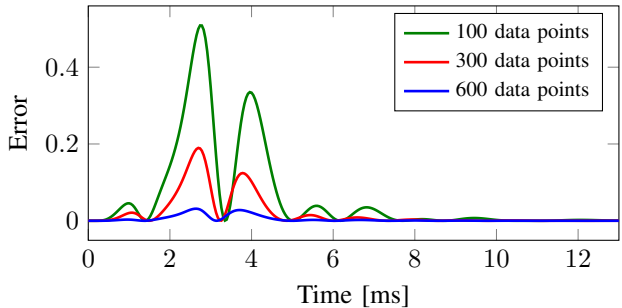
\begin{figure}[ht]
\begin{center}
	\tikzsetnextfilename{error}
\begin{tikzpicture}
\begin{axis}[
  name=plot1,
  xlabel={Time [ms]},
  ylabel={Error},
  legend pos=north west,
  width=\columnwidth,
  height=4.7cm,
  xmin=0,
  xmax=13,
  legend style={font=\footnotesize},
  legend cell align={left},
  legend pos=north east]
\addplot[color=green!50!black,line width=1pt,no marks] table [x index=0,y index=1]{data/error.dat};
\addplot[color=red,line width=1pt,no marks] table [x index=0,y index=2]{data/error.dat};
\addplot[color=blue,line width=1pt,no marks] table [x index=0,y index=3]{data/error.dat};
\legend{100 data points, 300 data points, 600 data points};
\end{axis}
\end{tikzpicture} 
	\vspace{-0.1cm}\caption{The mean square error between the desired trajectory and the actual closed-loop trajectory. With more data, the closed-loop follows the desired PHS, as stated in~\cref{lem:1}.}\vspace{-0.8cm}
	\label{fig:moredata}
\end{center}
\end{figure}
\section*{Conclusion}
In this paper, we present a Bayesian data-driven control approach (GPC-PHS) for partially unknown physical systems. A GP-PHS model uses recorded data of the physical system to learn the unknown dynamics. Then, we propose an IDS-PBC based control law which leverages the uncertainty quantification of the GP-PHS model to robustify the controller against the model mismatch. As consequence, the stability of a desired equilibrium can be guaranteed with high probability. 
With an increase in data, the closed-loop dynamics converges to the desired PHS. A simulation shows the effectiveness of the proposed control approach. Future work will address the tracking control problem with time-dependent $\x_d$.

\bibliographystyle{IEEEtran}
\bibliography{root}

\end{document}